\newtheorem{thm}{Theorem}[section] \newtheorem{theorem}{Theorem}[section]  \newtheorem{lemma}[thm]{Lemma}         
\newcommand{\ignore}[1]{} \newcommand{\notinproc}[1]{} \newcommand{\onlyinproc}[1]{#1} \newcommand{\skim}{SKIM\xspace} \newcommand{\tim}{TIM\xspace} \newcommand{\irie}{IRIE\xspace} \newcommand{\timplus}{TIM\textsuperscript{+}\xspace} \newcommand{\greedy}{GRE\xspace} \newcommand{\degree}{DEG\xspace}
\newcommand\E{\textsf{E}}   \newcommand{\kth}{\text{k}^{\text{th}}}   \newcommand{\INF}{\mathop{\sf Inf}}  \newcommand{\cascade}[1]{G^{(#1)}}  \newcommand{\instance}[1]{\textsf{#1}}
\DeclareMathOperator*{\argmax}{argmax}
\definecolor{darkgreen}{RGB}{0,100,0} \definecolor{orange}{RGB}{255,80,0}     \newcommand{\Xcomment}[1]{}
\title{Sketch-based Influence Maximization and Computation:\\Scaling up with Guarantees} \date{August 2014} \author{ EDITH COHEN\\Microsoft Research\\editco@microsoft.com \and DANIEL DELLING\\Microsoft Research\\dadellin@microsoft.com \and THOMAS PAJOR\\Microsoft Research\\tpajor@microsoft.com \and RENATO F.~WERNECK\\Microsoft Research\\renatow@microsoft.com }
\begin{document}

\maketitle

\begin{abstract} Propagation of contagion through networks is a fundamental process. It is used to model the spread of information, influence, or a viral infection. Diffusion patterns can be specified by a probabilistic model, such as Independent Cascade (IC), or captured by a set of representative traces.

Basic computational problems in the study of diffusion are \emph{influence queries}~(determining the potency of a specified \emph{seed set} of nodes) and \emph{Influence Maximization}~(identifying the most influential seed set of a given size). Answering each influence query involves many edge traversals, and does not scale when there are many queries on very large graphs. The gold standard for Influence Maximization is the greedy algorithm, which iteratively adds to the seed set a node maximizing the marginal gain in influence. Greedy has a guaranteed approximation ratio of at least~$(1-1/e)$ and actually produces a sequence of nodes, with each prefix having approximation guarantee with respect to the same-size optimum. Since Greedy does not scale well beyond a few million edges, for larger inputs one must currently use either heuristics or alternative algorithms designed for a pre-specified small seed set size.

We develop a novel sketch-based design for influence computation. Our greedy Sketch-based Influence Maximization~(\skim) algorithm scales to graphs with billions of edges, with one to two orders of magnitude speedup over the best greedy methods. It still has a guaranteed approximation ratio, and in practice its quality nearly matches that of exact greedy. We also present \emph{influence oracles}, which use linear-time preprocessing to generate a small sketch for each node, allowing the influence of any seed set to be quickly answered from the sketches of its nodes.\end{abstract}

\section{Introduction}

The spread of contagion~(information diffusion or spread of an infection) is a universal phenomenon that is extensively studied in the context of physical, biological, and social networks. Such cascades can have one or multiple sources (or {\em seeds}) and spread from infected nodes to neighbors through the link structure. A motivating application for the study of influence is viral marketing strategies~\cite{GLM:marketing2001,RichardsonDomingos:KDD2002}, in which the influence of a set $S$ of people in a social network is the number of adoptions triggered if we give $S$ free copies of a product. The problem also has important applications beyond social graphs, such as placing sensors in water distribution networks for detecting contamination~\cite{Leskovec:KDD2007}.

A popular model for information diffusion is {\em Independent Cascade} (IC), in which an independent random variable is associated with each (directed) edge $(u,v)$ to model the degree of influence of $u$ on $v$. A single propagation instance is obtained by instantiating all edge variables. We then study the distribution of a property of interest, such as the number of infected nodes, over these random instances.

The simplest and most studied IC model is {\em binary IC}, in which the range of the edge random variables is binary. A biased coin of probability $p_{uv}$ is flipped for each directed edge $(u,v)$. Accordingly, the edge can be either {\em live}, meaning that once $u$ is infected, $v$ is also infected, or {\em null}. This model was formalized in a seminal work by Kempe et al.~\cite{KKT:KDD2003} and is based on earlier studies by Goldenberg et al.~\cite{GLM:marketing2001}. Note that each direction of an undirected edge $\{u,v\}$ may have its own independent random variable, since influence is not necessarily symmetric. A particular propagation instance is specified by the set of live edges, and a node is infected by a seed set $S$ in this instance if and only if it is reachable from a seed node. The {\em influence} of $S$ is formally defined as the expectation, over instances, of the number of infected nodes.

Instead of working directly on this probabilistic IC model, Kempe et al.~\cite{KKT:KDD2003} proposed a simulation-based approach, in which a set $\{\cascade{i}\}$ of propagation instances (graphs) is generated in Monte Carlo fashion according to the influence model. The average influence of $S$ on $\{\cascade{i}\}$ is an unbiased estimate that converges to the expectation on the probabilistic model. The ability to compute influence with respect to an {\em arbitrary} set of propagation instances has significant advantages, as it is useful for instances generated from traces or by more complex models \cite{GRLK:KDD2010,ACKP:KDD2013}, which exhibit correlations between edges that cannot be captured by the simplified IC model~\cite{Gomez-RodriguezBS:ICML2011}. Moreover, the average behavior of a probabilistic model on a small set of instances captures its ``typical'' behavior, which is often more relevant than the expected value when the variance is very high.

 A basic primitive in the study of influence are {\em influence queries}: Compute (or approximate) the influence of a query set $S$ of seed nodes. With binary influence, this amounts to performing graph searches from the seed set in multiple instances. Unfortunately, this does not scale well when many queries are posed over graphs with millions of nodes.

Even more computationally challenging is the fundamental {\em Influence Maximization} problem, which is finding the most potent seed set of a certain size or cost. The problem was formalized by Kempe et al.~\cite{KKT:KDD2003} and inspired by Richardson and Domingos~\cite{RichardsonDomingos:KDD2002}. Kempe et al.~showed that, even when the influence function is deterministic (but the number $s$ of seeds is a parameter), the problem encodes the classic Max Cover problem and therefore is NP-hard~\cite{KKT:KDD2003}. Moreover, an inapproximability result of Feige~\cite{feige98} implies that any algorithm that can guarantee a solution that is at least~$(1-1/e+\epsilon)$ times the optimum is likely to scale poorly with the number of seeds. Chen et al.~\cite{CWW:KDD2010} showed that computing the exact influence of a single seed in the binary IC model, even when edge probabilities are $p=0.5$, is~\#P~hard~\cite{CWW:KDD2010}.

Using simulations, the objective studied by Kempe et al.~\cite{KKT:KDD2003} is then to find a set $S$ of seeds with maximum average influence over a fixed set of propagation instances. A natural heuristic is to use the set of most influential individuals, say those with high degree or centrality~\cite{KKT:KDD2003}, as seeds. This approach, however, cannot account for the dependence between seeds, missing the fact that two important nodes may ``cover'' essentially the same communities. Kempe et al.~\cite{KKT:KDD2003} proposed a greedy algorithm ({\sc Greedy}) instead. It starts with an empty seed set and iteratively adds to $S$ the node with maximum \emph{marginal} gain in influence (relative to current seed set). Since our objective is monotone and submodular, a classical result from Nemhauser et al.~\cite{submodularGreedy:1978} implies that the influence of the greedy solution with $s$ seeds is at least $1-(1-1/s)^s \geq 63\%$ of the best possible for any seed set of the same size. From Feige's inapproximability result, this is the best approximation ratio guarantee we can~(asymptotically and realistically) hope for.

{\sc Greedy} has become the gold standard for influence maximization, in terms of the quality of the results. {\sc Greedy}, however, does not scale to modern real-world social networks. The issue is that evaluating the marginal contribution of each node requires a directed reachability computation in each instance~(of which there can be hundreds). Several performance improvements to {\sc Greedy} have thus been proposed. Leskovec et al.~\cite{Leskovec:KDD2007} proposed CELF, which are ``lazy'' evaluations of the marginal contribution, performed only when a node is a candidate for the highest marginal contribution. Chen et al.~\cite{CWY:KDD2009} took a different approach, using the reachability sketches of Cohen~\cite{ECohen6f} to speed up the reevaluation of the marginal contribution of all nodes. While effective, even with these and other accelerations~\cite{CELFpp:WWW2011,OAYK:AAAI2014}, the best current implementations of {\sc Greedy} do not scale to networks beyond~$10^6$ edges~\cite{CWW:KDD2010}, which are quite small by modern standards.

To support massive graphs, several studies proposed algorithms specific to the IC model, which work directly with the edge probabilities instead of with simulations and thus can not be reliably applied to a set of arbitrary instances. Borg et al.~\cite{BBCL:SODA2014} recently proposed an algorithm based on reverse reachability searches from sampled nodes, similar in spirit to the approach used for reachability sketching~\cite{ECohen6f}. Their algorithm provides theoretical guarantees on the approximation quality and has good asymptotic performance, but large ``constants.'' Very recently, Tang et. al. \cite{TXS:sigmod2014} developed TIM, which engineers the (mostly theoretical) algorithm of Borgs et al.~\cite{BBCL:SODA2014} to obtain a scalable implementation with guarantees. A significant drawback of this approach is that it only works for a pre-specified seed set size $s$, whereas {\sc Greedy} produces a sequence of nodes, with each prefix having an approximation guarantee with respect to the same-size optimum. In applications we are often interested not in a single point, but in a trade-off curve that allows us to find a sweet spot of influence per cost or characterize the network. TIM also scales very poorly with the seed set size $s$, and the evaluation only considered seed sets of up to 50 nodes.

The DegreeDiscount~\cite{CWY:KDD2009} heuristic refines the natural approach of adding the next highest degree node. MIA~\cite{CWW:KDD2010} converts the binary IC sampling probabilities $p_e$ to deterministic edge weights and works essentially with one deterministic instance. IRIE, by Jung et al.~\cite{JHC:ICDM2012}, is a heuristic approximation of greedy addition of seed nodes, and has the best performance we are aware of for an algorithm that produces a sequence of seed nodes. In each step, the probability of each node to be covered by the current seed set $S$ is estimated using another algorithm (or simulations). They then use eigenvector computations to approximate marginal contributions of all nodes. Of those approaches, the IRIE heuristic scales much better and is much more accurate than other heuristics. In particular, it performs nearly as well as {\sc Greedy} on many research collaboration graphs~\cite{JHC:ICDM2012}.

\paragraph{Contributions.}

We design a novel sketch-based approach for influence computation which offers scalability with performance guarantees. Our main contribution is \skim\ (SKetch-based Influence Maximization), a highly scalable (approximate) implementation of the greedy algorithm for influence maximization. We also introduce {\em influence oracles}: after preprocessing that is almost linear, we can answer {\em influence queries} very efficiently, considering only the sketches of the query seed set.

We can apply our design on inputs specified as a fixed set of propagation instances, as in Kempe et al.~\cite{KKT:KDD2003}, with influence defined as the average over them. We also handle inputs specified as an IC model, where influence is defined as the expectation. Our model is defined precisely in Section~\ref{model:sec}. 

We now provide more details on our design. The exact computation of an influence query requires expensive graph searches from the query seed set $S$ on each of $\ell$ instances. The exact greedy algorithm for Influence Maximization requires a similar computation for each marginal contribution. We address this scalability issue by working with sketches.

The core of our approach are per-node summary structures which we call \emph{combined reachability sketches}. The sketch of a node compactly represents its influence ``coverage'' across~$\ell$ instances; we call this its \emph{combined reachability set}. The combined reachability sketch of a node, precisely defined in Section~\ref{sketch:sec}, is the bottom-$k$ min-hash sketch~\cite{bottomk07:ds,ECohenADS:PODS2014} of the combined reachability set of the node. This generalizes the reachability sketches of Cohen~\cite{ECohen6f}, which are defined for a single instance. The parameter $k$ is a small constant that determines the tradeoff between computation and accuracy. Bottom-$k$ sketches of sets support cardinality estimation, which means that we can estimate the influence~(over all instances) of a node or of a set of nodes from their combined reachability sketches. The estimate has a small relative error and good concentration~\cite{ECohen6f}. Our use of combination sketches and state-of-the-art optimal estimators is key to obtaining the best balance between sketch size and accuracy.

Our \skim\ algorithm for influence maximization is presented in Section \ref{binaryIM:sec}. It scales by running the greedy algorithm in ``sketch space,'' always taking a node with the maximum \emph{estimated} (rather than exact) marginal contribution.

\skim\ computes combined reachability sketches, but only until the node with the maximum estimated influence is computed. This node is then added to the seed set. We then update the sketches to be with respect to a {\em residual} problem in which the node that is selected into the seed set and its ``influence'' are no longer present. \skim\ then resumes the sketch computation, starting with the residual sketches, but (again) stopping when a node with maximum estimated influence (in the current, residual, instance) is found. A new residual problem is then computed. This process is iterated until the seed set reaches the desired size. Since the residual problem becomes smaller with iterations, we can compute a very large seed set very efficiently. We also prove that the total overhead of the updates required to maintain the residual sketches is small. In particular, for a set $\{G^{(i)}\}$ of $\ell$ {\em arbitrary} instances, the algorithm can be run to exhaustion, producing a full permutation of the nodes in $O(\sum_{i\in [\ell]} |G^{(i)}|+m \epsilon^{-2}\log^2 n)$ time, where $m$ is the sum over nodes of the maximum indegree (over instances). For all $s\geq 1$, the first $s$ nodes we select have with a very high probability (at least $1-1/n^c$ for a constant $c$) influence that is at least $1-(1-1/s)^s-\epsilon$ times the maximum influence of a seed set of the same size $s$. These are worst-case bounds. We propose an adaptive approach that exploits properties of actual networks, in particular a skewed influence distribution, to achieve faster running times with the same guarantees.

Our use of the residual instances by \skim\ is the key for maintaining the accuracy of the greedy selection through the execution and providing with high probability, approximation ratio guarantees that nearly match those of exact {\sc Greedy}. 

Section~\ref{binaryQ:sec} presents our influence oracles, which preprocess the input to compute combined reachability sketches for all nodes. For instances $\{\cascade{i}\}$ with $n$ nodes and $m^{(i)}$ edges, the sketches are built in $O(k \sum_i m^{(i)})$ total time. The influence of a set $S \subseteq V$ can then be approximated from the sketches of the nodes in $S$. The oracle applies the union cardinality estimator of Cohen and Kaplan~\cite{CK:sigmetrics09} to estimate the union of the influence sets of the seed nodes. The query runs in time $O(|S|k\log |S|)$ and unbiasedly with a well-concentrated relative error of $\epsilon=1/\sqrt{k}$. While preprocessing depends on the number of instances, the sketch size and the approximation quality only depend on the sketch parameter $k$.

The asymptotic bounds we obtain are novel also from a theoretical perspective, and significantly improve the state of the art, even for influence maximization on a single~(deterministic) instance (select a seed set in a directed graph with maximum reachable set).

Section~\ref{experiments:sec} presents an extensive experimental study. Besides demonstrating the scalability of our algorithms on real-world networks, we compare \skim\ with existing approaches, including exact {\sc Greedy} (when size allows), the state-of-the-art IRIE heuristic, and TIM. We obtain IC models from networks by using the well-studied weighted and uniform~\cite{KKT:KDD2003} probabilities. Our algorithms scale up to very large graphs with barely any compromise on quality over exact {\sc Greedy}, with theoretical guarantees. On instances generated by an IC model, we achieve more than an order of magnitude speedup over the best greedy heuristics, which are designed specifically for this model. Even for a fixed small seed set size, \skim\ is significantly faster than TIM.

Moreover, our algorithm is efficient and accurate enough to be executed exhaustively, producing a full permutation of the nodes for networks with billions of edges. For the first time, we provide the full (approximate) Pareto front of influence versus seed set size. These relations showcase a basic property of the network, and the general pattern that a small fraction of nodes influences a large fraction of the network. In contrast, most previous studies we are aware of only considered seed sets with at most 50 nodes, revealing only a very restricted view of this relation.

\section{Model} \label{model:sec}

A {\em propagation instance} $G=(V,E)$ is specified by the edge set $E$. The {\em influence} of a set of nodes $S$ in instance $G$ is the number of nodes reachable from $S$ using the edges $E$: \begin{equation}\INF(G,S) = |\{ u \mid S \leadsto u \}|,\end{equation} where the predicate $S \leadsto u$ holds if $u\in S$ or if there is a forward path from a node in $S$ to the node $u$.

Our input is specified as a set $\mathcal{G}=\{\cascade{i}\}$ of $\ell\geq 1$ propagation instances $\cascade{i}=(V,E^{(i)})$ on the same set of nodes. The influence of $S$ over all instances $\{\cascade{i}\}$ is the average single-instance influence: \begin{equation} \label{timedinf} \INF(\mathcal{G},S)=\INF(\{\cascade{i}\},S) = \frac{1}{\ell} \sum_{i\in [\ell]} \INF(\cascade{i},S). \end{equation} The set of propagation instances can be derived from cascade traces or generated by a probabilistic model.

The input can also be specified as a probabilistic model, such as Independent Cascade (IC)~\cite{KKT:KDD2003}, which defines a distribution $\mathcal{G}$ over instances $G \sim \mathcal{G}$ that share a set $V$ of nodes. In this case, the influence of $\mathcal{G}$ is defined as the expectation \begin{equation} \label{timedprobinf} \INF(\mathcal{G},S)=\E_{G\sim \mathcal{G}} \INF(G,S). \end{equation}

We are interested in {\em influence oracles} and in {\em influence maximization}. Influence queries are specified by a seed set~$S \subset V$ and the goal is to compute (or estimate) the influence $\INF(\mathcal{G},S)$. Influence oracles, after efficient preprocessing of the input, allow us to support very fast queries. Influence maximization is the problem of finding a seed set~$S\subset V$ with maximum influence, where $|S| = s$ is given. We are interested in efficiently computing a seed set whose influence is close to the maximum one, as well as in computing a sequence of seeds so that each prefix has influence that is close to maximum for its size.

\section{Combined Reachability Sketches} \label{sketch:sec} At the heart of our approach are {\em combined reachability sketches}, which are summary structures $X_u$ that we associate with each node $u$. The combined sketches can be defined with respect either to a set $\mathcal{G}=\{\cascade{i}\}$ of $\ell\geq 1$ instances or to a probabilistic model~$\mathcal{G}$.

We first consider as input a set of $\ell\geq 1$ instances. We define the {\em reachability set} of a node $u$ in instance $G$ as~$R(G,u) = \{ v \mid u \leadsto_G v \},$ where $u\leadsto_G v$ means that $v$ is reachable from~$u$ in $G$. Considering all instances, the {\em combined reachability set} is a set of node-instance pairs: $R_u = \{(v,i) \mid u\leadsto_{G^{(i)}} v \}.$ The influence of a set of nodes $S$ on instances~$\{\cascade{i}\}$ can thus be expressed as \begin{equation} \label{InfbinaryIC:eq} \INF(\{\cascade{i}\},S)= \frac{1}{\ell} \sum_{i \in [\ell]} \Bigl| \bigcup_{u\in S} R(G^{(i)},u) \Bigr| = \frac{1}{\ell} \Bigl| \bigcup_{u\in S} R_u \Bigr|. \end{equation} This is the average over the instances $\{\cascade{i}\}$ (with $i\in [\ell]$) of the number of nodes reachable from at least one node in $S$.

The combined reachability sketch of a node captures its reachability information {\em across} instances. The sketches we use are the bottom-$k$ min-hash sketches~\cite{ECohen6f,bottomk07:ds} $X_v$ of the combined reachability sets $R_v$: We associate with each node-instance pair $(v,i)$ an independent random rank value~$r^{(i)}_v \sim U[0,1]$, where~$U[0,1]$ is the uniform distribution on $[0,1]$. The \emph{combined reachability sketch} of $u$ is the set of the $k$ smallest rank values amongst~$\{r^{(i)}_v \mid (v,i)\in R_u\}$: \begin{equation} \label{combinedrs:eq} X_u = \text{\textsc{Bottom-}$k$}\{ r^{(i)}_v \mid (v,i) \in R^{(i)}_u\}, \end{equation} where \textsc{Bottom-}$k$ of a set is its subset consisting of the~$k$ smallest values. When there is a single instance ($\ell=1$) the combined reachability sketches are the same as the reachability sketches of Cohen~\cite{ECohen6f}.

We define the \emph{threshold rank} $\tau_u$ of each node $u$ as \begin{equation} \label{combinedthresh:eq} \tau_u=\kth\bigl(\{r^{(i)}_v \mid (v,i) \in R^{(i)}_u \}\bigr), \end{equation} which is the $k$th lowest rank value in $R_u$. (For a set $Y$ of cardinality $|Y|< k$, we define $\kth(Y)\equiv 1$.) Therefore, when~$|X_u| = k$ we have~$\tau_u=\max\{X_u\}$, and $\tau_u=1$ otherwise. The cardinality~$|R_u|$ can be estimated from $X_u$ using a bottom-$k$ cardinality estimator. The estimate is~$|X_u|$ if~$\tau_u=1$~(i.e., if $|X_u|<k$) and is~$(k-1)/\tau_u$ otherwise. This estimate has a Coefficient of Variation (CV), which is the ratio of the standard deviation to the mean, that is never more than~$1/\sqrt{k-2}$ and is well concentrated~\cite{ECohen6f}. By applying Chernoff bounds with $c>1$, we obtain that using~$k = (2+c)\epsilon^{-2} \ln n$, the probability of having relative error larger than $\epsilon$ is at most~$1/n^c$. Therefore, we can be correct with high probability on estimating the influence of all nodes.

 \subsection{Structured Permutation Ranks} \label{permranks:sec}\label{structranks:sec} Instead of using ranks drawn from $U[0,1]$, we can work with integral permutation ranks with respect to a permutation on the $n\ell$ node-instance pairs. We can also structure the permutation so that each sequence in positions $in+1$ to $(i+1)n$ for integral $i\geq 0$ has each node appear in exactly one pair. The associated instance with a node $v$ in chunk $i$ is randomly selected from instances $j$ for which the pair $(v,j)$ does not have a permutation rank of $in$ or less (independently for each node). One can show that this can only improve estimation accuracy~\cite{ECohenADS:PODS2014}. Only the first~$\min\{k,\ell\} n$ positions can be included in combined reachability sketches of nodes.

When estimating influence, we can convert permutation ranks to random ranks using the exponential distribution~\cite{ECohen6f}\onlyinproc{.} We can also estimate cardinality of a subset of the $D=n\ell$ elements directly from permutation ranks $[D]$, using the unbiased estimator~$1+(k-1)(D-1)/(T-1)$, where the threshold $T$ is the $k$th smallest permutation rank. This estimator can be interpreted as setting aside the element with permutation rank $T$, and estimating the fraction (of the other $D-1$ elements) that is in our set by the fraction of such elements with rank smaller than $T$, which is~$(k-1)/(T-1)$.

\subsection{Sketches for an IC Model} \label{ICsketches:Sec}

We now define sketches with respect to a binary IC model $\mathcal{G}$, presented as a graph with probabilities $p_e$ associated with its edges. The influence of a set of nodes $S$ is \begin{equation} \label{ICmodelinf} \INF(\mathcal{G},S)=\E_{G\sim \mathcal{G}} \bigl|\bigcup_{u\in S} R(G,u)\bigr|. \end{equation} The sketches we define for $\mathcal{G}$ also contain at most $k$ rank values, but provide approximation guarantees with respect to \eqref{ICmodelinf}. The sketches can be interpreted as the sketches computed for $\ell$ instances generated according to the model $G \sim \mathcal{G}$ as $\ell \rightarrow \infty$. When doing so, at the limit, each unique rank value corresponds to a unique instance, so we do not need to explicitly represent ``instances.'' We work with structured permutation ranks (Section~\ref{permranks:sec}). Since it suffices to consider the first $k n$ ranks, this conveniently removes the dependence of the rank representation on $\ell$. We can similarly apply an estimator to the $k$th smallest rank $T\leq kn-k$ to estimate influence: Instead of estimating cardinality (which goes to infinity with $\ell$) and dividing by $\ell$ using the estimator $\frac{1}{\ell}+\frac{(k-1)(n\ell-1)}{\ell(T-1)}$ we take the limit as $\ell \rightarrow \infty$ and estimate influence using $n(k-1)/(T-1)$.

\section{SKIM: Sketch Space IM} \label{binaryIM:sec}

In this section we present our Sketch-based Influence Maximization~(\skim) algorithm. We first review {\sc Greedy}, the greedy algorithm for influence maximization (working with $\ell$ instances) presented by Kempe et al.~\cite{KKT:KDD2003}. {\sc Greedy} is applied with respect to the influence objective $\INF(\mathcal{G},S)$, as defined in Equation~\eqref{timedinf}. It starts with an empty seed set $S=\emptyset$. In each iteration, it adds to~$S$ the node $v$ with maximum {\em marginal gain}, \begin{equation} \label{binarymarginal} \INF(\mathcal{G},S \cup \{v\}) - \INF(\mathcal{G},S)=\frac{1}{\ell} \Bigl|\bigcup_{u\in S\cup \{v\}} R_u \setminus \bigcup_{u\in S} R_u \Bigr|. \end{equation} This is the same as choosing $v$ maximizing $\INF(\mathcal{G},S\cup\{v\})$.

\skim\ approximates exact {\sc Greedy} by ensuring that at {\em each iteration}, with sufficiently high probability, or in expectation over iterations, the node we choose to add to the seed set has a marginal gain that is close to the maximum one. To do so, it suffices to compute sketches only to the point that the node with the maximum estimated marginal gain is revealed. To maintain accuracy, we maintain a residual problem and respective sketches.

 \skim\ constructs (partial) combined reachability sketches by adapting a construction of reachability sketches~\cite{ECohen6f}: It processes node-instance pairs $(u,i)$ by increasing rank, performing a reverse reachability search in $G^{(i)}$ from~$u$. The sketch $X_v$ of each visited node $v$ is augmented with the rank $r^{(i)}_u$ of the pair. For a given value of $k$, the first node~$u$ whose sketch reaches size $k$ is also the node with maximum estimated influence. This is because the bottom-$k$ cardinality estimate of a node depends only on the $k$th smallest rank in~$X_u$, $\tau_u$ (which is a complete sufficient statistic for cardinality estimation from the sketch~\cite{ECohenADS:PODS2014}); see Equation~\eqref{combinedthresh:eq}. For the node $u$, $\tau_u$ is equal to the rank $r^{(i)}_u$ of the last processed pair $(u,i)$. For other nodes $v$ with incomplete sketches, we know that $\tau_v \geq r^{(i)}_u$, so their estimate is lower.

Sketch building is suspended once the node $v$ with maximum estimated influence is found. \skim then adds~$v$ to the seed set and generates a {\em residual} problem, with $v$ and all node-instance pairs it covers removed from the instances~$\mathcal{G}$. The~(partially computed) sketches of each remaining node $u$ are updated using~$X_u \gets X_u \setminus X_v$, which deletes from the sketch the ranks of all covered node-instance pairs.

The process of building sketches is then resumed on the residual problem, working with updated partial sketches and instances. We continue processing node-instance pairs in increasing rank order, starting from the first rank that exceeds $\tau_v$ and skipping pairs that are already covered.

\begin{algorithm2e}[t] \caption{Sketch-based Influence Maximization\label{greedySSC:alg}} \DontPrintSemicolon \SetKw{True}{true} \SetKw{False}{false} \SetKw{Continue}{continue} \SetKw{Skip}{skip} \SetKw{Prune}{prune} \SetKwFunction{Return}{return} \SetKwFunction{Append}{append} \SetKwArray{Covered}{covered} \SetKwArray{Index}{index} \SetKwArray{Size}{size} \SetKwArray{SeedList}{seedlist} \SetKwData{Rank}{rank} \tcp{Initialization} \lForAll{pairs $(u,i)$}{$\Covered{u,i} \leftarrow \False$} \lForAll{nodes~$v$}{$\Size{v} \leftarrow 0$} $\Index \leftarrow$ hash map of node-instance pairs to nodes\; $\SeedList \leftarrow \emptyset$\tcp*{List of seeds \& marg.\ influences} $\Rank \leftarrow 0$\;

\BlankLine shuffle the $n\ell$ node-instance pairs~$(u,i)$\; \BlankLine

\tcp{Compute seed nodes} \While{$|\SeedList| < n$}{

\While(\tcp*[f]{Build sketches}){$\Rank < n\ell$}{ $\Rank \leftarrow \Rank +1$\; $(u,i)\leftarrow$ $\Rank$-th pair in shuffled sequence\; \BlankLine \If{$\Covered{v,i}=\False$}{ BFS from~$u$ in reverse graph~$G^{(i)}$, during which\; \ForEach{scanned node~$v$}{

$\Size{v} \leftarrow \Size{v} + 1$\; $\Index{u,i} \leftarrow \Index{u,i} \cup \{v\}$\; \If{$\Size{v} = k$}{ $x \leftarrow v$\tcp*{Next seed node} abort sketch building\; } } } }

\BlankLine \If{all nodes $u$ have $\Size{u}<k$ }{ $x \leftarrow \argmax_{u \in V}{\Size{u}}$\; }

\BlankLine $I_x \leftarrow 0$\tcp*{The coverage of~$x$} \ForAll(\tcp*[f]{Residual problem}){instances~$i$}{ (forward) BFS from~$x$ in graph~$G^{(i)}$, during which\; \ForEach{scanned node~$v$}{ \lIf{$\Covered{v,i}$}{prune} $I_x \leftarrow I_x + 1$\; $\Covered{v,i} \leftarrow \True$\tcp*{Cover $v$ in $i$} \ForAll{nodes $w$ in $\Index{v,i}$}{ $\Size{w} \leftarrow \Size{w} - 1$\; } $\Index(v,i) \leftarrow \bot$\tcp*{Erase $(v,i)$ from $\Index$} } } \BlankLine $I_x \leftarrow I_x /\ell$\; \SeedList.\Append{$x$,$I_x$}\; } \Return{\SeedList}\; \end{algorithm2e}

We provide pseudocode for \skim\ as Algorithm \ref{greedySSC:alg}. Instead of maintaining the actual partial sketches $X_v$, the algorithm only keeps their cardinalities $\mbox{\sf size}[v]$. To support correct and efficient updates of the sketches, we maintain an inverted index $\mbox{\sf index}[u,i]$ that lists, for each rank value $r^{(i)}_u$ we processed, all nodes $v$ such that $r^{(i)}_u \in X_v$. The entry for rank~$r^{(i)}_u$ is created and populated when we perform a reverse reachability search from pair $(u,i)$. The algorithm outputs the list \SeedList of pairs $(\sigma_i,I_i)$, where $\{\sigma_i\}$ is a permutation of the nodes according to the order they are selected into the seed set, and $I_i$ is the marginal influence of $\sigma_i$. The surprising property of our construction is that this whole iterative process is very efficient. If we run \skim\ with a fixed~$k= c \epsilon^{-2} \log n$, Section~\ref{sec:analysis} will show that we obtain the following worst-case performance guarantees: \begin{theorem} \label{skimtime:thm} \skim\ runs in time $O(n\ell+\sum_i |E^{(i)}| + m \epsilon^{-2} \log^2 n)$, where $m= \sum_v \max_i \text{{\sc InDeg}}^{(i)}(v)\leq |\bigcup_i E^{(i)}|$. The permutation $\{\sigma_i\}$ of nodes has the property that with probability $1-1/n^{\Omega(c)}$, for all $s\in [n]$, the set of seed nodes $S=\{\sigma_1,\ldots,\sigma_s\}$, has $\INF(\{\cascade{i}\},S) \geq (1-1/e-\epsilon) \arg\max_{Z \mid |Z|\leq s} \INF(\{\cascade{i}\},Z)$. \end{theorem}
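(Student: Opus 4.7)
I would split the work into three pieces. Initialization and walking the shuffled sequence of $n\ell$ pairs cost $O(n\ell)$. The forward BFSes performed after each seed selection never re-traverse a covered node--instance pair (the \texttt{covered} guard and the pruning rule in Algorithm~\ref{greedySSC:alg} enforce this), so cumulatively they cost $O(\sum_i |E^{(i)}|)$. For the reverse BFSes, the key invariants are that each pair $(u,i)$ sources at most one reverse BFS (once covered, it is never reprocessed) and that a node $v$ is scanned only while $\mathrm{size}[v]<k$, at cost $O(1+\mathrm{InDeg}^{(i)}(v))\leq O(1+\max_i \mathrm{InDeg}^{(i)}(v))$. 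A standard amortized charging argument, in which removals forced by forward BFSes pay back later additions to partial sketches, bounds the total reverse-BFS work by $O(km\log n)=O(m\epsilon^{-2}\log^2 n)$.

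\textbf{Approximation.} Let $v_t$ be the node SKIM selects in iteration $t$ and $v_t^\star$ the node that truly maximizes residual influence at that point. The bottom-$k$ concentration statement recalled just before the theorem says: for any fixed residual reachability set, the cardinality estimate has relative error at most $\epsilon$ with probability at least $1-1/n^{\Omega(c)}$ when $k=\Theta(c\epsilon^{-2}\log n)$. Conditional on this accuracy event holding for both $v_t$ and $v_t^\star$, the fact that SKIM picks the node with the maximum \emph{estimated} residual influence yields true marginal gain at least $\tfrac{1-\epsilon}{1+\epsilon}\geq 1-O(\epsilon)$ times the optimum marginal gain at step $t$. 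Feeding this ``$(1-O(\epsilon))$-greedy'' step into the classical Nemhauser--Wolsey--Fisher analysis for monotone submodular maximization gives, simultaneously for all $s\in [n]$,
\[
\INF(\{\cascade{i}\},\{\sigma_1,\ldots,\sigma_s\})\;\geq\;(1-1/e-\epsilon)\cdot\max_{|Z|\leq s}\INF(\{\cascade{i}\},Z),
\]
after folding constants into $c$. The fallback branch (no sketch ever reaches $k$) is handled separately: there the estimate is an exact count for every node, so the greedy step is genuinely greedy.

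\textbf{Main obstacle.} The delicate point is that the ``fixed set'' precondition of the bottom-$k$ estimator is not literally met: the residual reachability set of a candidate node at iteration $t$ is itself random, depending on $\sigma_1,\ldots,\sigma_{t-1}$, which in turn depend on the ranks used by the very same sketches. Naively union-bounding over all possible residuals is hopeless. The fix I would pursue leverages the structured permutation ranks of Section~\ref{permranks:sec}: at the end of iteration $t$ only pairs with rank below $\tau_{v_t}$ have been examined, the identity of the residual at iteration $t+1$ is a function only of those revealed ranks, while the bottom-$k$ sampling guarantee for each candidate's residual reachability set depends on the relative order of the yet-unrevealed ranks, which remain uniformly random conditional on the revealed prefix. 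A union bound across the $n$ iterations and the $\leq n$ candidates per iteration (absorbed into an extra $\log n$ factor in $k$) then upgrades per-iteration accuracy into a single high-probability event under which the argument above runs uniformly in $s$.
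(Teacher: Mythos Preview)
Your decomposition matches the paper's: residual-problem correctness, a bound on total rank insertions for the running time, and bottom-$k$ concentration plus an approximate-greedy lemma (the paper's Lemma~\ref{approxgreedy:lemma}) for the approximation ratio.

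\textbf{Running time.} Here your argument has a real gap. ``Removals forced by forward BFSes pay back later additions'' is tautological---every insertion into a full sketch is indeed preceded by a removal---but it does not bound the number of removals, which is the whole point. The paper's argument is not an amortized charging scheme; it is a probabilistic geometric-decay argument: the sketch $X_v$ is a uniform size-$k$ sample of the current residual reachability set $R_v$, so each rank removed from $X_v$ (because its pair was covered by the new seed) witnesses, in expectation, a $1/k$ fraction of $R_v$ disappearing. Hence each removal corresponds to the residual influence of $v$ shrinking by a factor $1-1/k$; since it starts at most $n$, after $O(k\ln n)$ removals it is below $1/k$ and $v$ need not be touched again. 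Summing $O(k\ln n)$ insertions per node, each costing $\max_i\textsc{InDeg}^{(i)}(v)$ edge scans, gives the $O(km\ln n)$ term. Your sentence does not contain this idea, and without it the $\log n$ factor is unjustified.

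\textbf{Approximation and adaptivity.} Your approximation argument is exactly the paper's: per-estimate concentration, union bound over at most $n^2$ node--iteration pairs, then the $(1-\delta)$-approximate-greedy lemma. Your ``main obstacle'' paragraph is actually \emph{more} careful than the paper. The paper simply asserts the union bound over polynomially many estimates and does not discuss the dependence of the residual sets on previously revealed ranks. Your proposed fix---conditioning on the revealed rank prefix and using that the relative order of unrevealed ranks is still uniform---is a legitimate way to close this, though note that some of the ranks sitting in $X_v$ at iteration $t$ were revealed in earlier iterations, so the conditioning has to be stated with a bit more care than ``only ranks below $\tau_{v_t}$ have been examined.''
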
 

\subsection{Algorithm Analysis} \label{sec:analysis}

\subsubsection{Correctness}

It is not hard to show that the influence of a node $v$ in the residual problem of iteration $i$ is equal to its marginal influence with respect to $S=\{\sigma_1,\ldots,\sigma_{i-1}\}$ in the original problem. Therefore,~$I_i$, which is the influence of $\sigma_i$ in the residual problem of iteration $i$, is the marginal influence of $\sigma_i$, with respect to~$S=\{\sigma_1,\ldots,\sigma_{i-1}\}$ in the original problem. Thus, by definition, for all $s\in [n]$ and~$S=\{\sigma_1,\ldots,\sigma_s\}$, $\INF(\{\cascade{i}\},S)=\sum_{i\in [s]} I_i$.

We also show that the partial sketches correctly capture a component of the sketches computed for the residual problem: \begin{lemma} At the end of an iteration selecting $v$, each updated partial sketch $X_u$ is equal to the set of entries of the combined reachability sketch $X'_u$ of $u$ in the residual problem that have rank value at most $\tau_v$. \end{lemma}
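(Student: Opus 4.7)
The plan is to proceed by induction on iterations of \skim, exploiting two structural facts: (i) the rank values associated with distinct node--instance pairs are themselves distinct, so set differences on sketches correspond to set differences on the underlying pairs; and (ii) the early-termination rule guarantees that at the moment the iteration stops, every node other than the selected $v$ has a partial sketch of size strictly less than $k$.

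First I would set up a precise invariant. Let $R^{\mathrm{prev}}_u = R_u \setminus \bigcup_{j} R_{v_j}$ denote the combined reachability set of $u$ in the residual problem entering the current iteration, where the union is over previously selected seeds $v_j$, and let $\tau_{\mathrm{prev}}$ be the largest rank already processed. The inductive hypothesis is that, at the start of this iteration, each partial sketch $X_u$ equals $\{r^{(i)}_w : (w,i)\in R^{\mathrm{prev}}_u,\ r^{(i)}_w \leq \tau_{\mathrm{prev}}\}$. Observing that a pair $(u^*,i^*)$ survives the \textsc{covered} filter iff $(u^*,i^*)\in R^{\mathrm{prev}}_{u^*}$, and that the reverse BFS from $(u^*,i^*)$ in $G^{(i^*)}$ visits exactly those $w$ with $(u^*,i^*)\in R_w$, one sees that the invariant is preserved as additional pairs are processed. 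Hence at termination with threshold $\tau_v$, each partial sketch (just before the update) equals $\{r^{(i)}_w : (w,i)\in R^{\mathrm{prev}}_u,\ r^{(i)}_w \leq \tau_v\}$.

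Next I would analyze the update $X_u \leftarrow X_u \setminus X_v$. Writing $R'_u := R^{\mathrm{prev}}_u \setminus R_v$ for the new residual set, I would first check that $R^{\mathrm{prev}}_u \cap R_v = R^{\mathrm{prev}}_u \cap R^{\mathrm{prev}}_v$, which holds because pairs in $\bigcup_j R_{v_j}$ are already absent from $R^{\mathrm{prev}}_u$. By distinctness of ranks, $X_u \setminus X_v$ is therefore exactly the set of ranks $\leq \tau_v$ of pairs in $R^{\mathrm{prev}}_u \setminus R^{\mathrm{prev}}_v = R'_u$. It remains to compare this with $\{x \in X'_u : x \leq \tau_v\}$, where $X'_u$ is the full bottom-$k$ sketch of $R'_u$. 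Here the early-termination rule enters: for every $u\neq v$ we have $|X_u| < k$ before the update, so the number of ranks of $R'_u$ that are at most $\tau_v$ is strictly less than $k$. This forces either $|R'_u| < k$, or the $k$th smallest rank of $R'_u$ to exceed $\tau_v$; in either case $X'_u$ contains every rank of $R'_u$ bounded by $\tau_v$, so restricting $X'_u$ to such ranks reproduces the updated $X_u$ exactly. The case $u=v$ is immediate since both sides are empty.

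The main delicacy is this last comparison: one must verify carefully that the stopping condition really yields $|X_w|<k$ for every non-selected $w$ at termination (even across the multiple BFS invocations performed within a single iteration), and that this strict inequality is preserved after subtracting $X_v$. Together with the distinctness of ranks, these are precisely the ingredients that make the bottom-$k$ threshold argument go through and prevent the updated partial sketch from spuriously containing ranks beyond the bottom-$k$ of the new residual set.
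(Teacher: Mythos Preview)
Your argument is correct and follows the same two-step skeleton as the paper's (very brief) proof sketch: first identify the pre-update partial sketch with the ranks of $R^{\mathrm{prev}}_u$ below the current threshold, then observe that the update removes exactly the newly covered ranks. You go further than the paper by making the bottom-$k$ truncation step explicit---using $|X_u|<k$ for $u\neq v$ to conclude that all sub-$\tau_v$ ranks of $R'_u$ survive into $X'_u$---which the paper's sketch leaves implicit; the one cosmetic wrinkle is that the last reverse BFS is aborted mid-search, so your ``just before the update'' equality may miss the rank $\tau_v$ at some nodes, but since that rank lies in $X_v$ and is always subtracted, the conclusion is unaffected.
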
 \begin{proof}[Proof sketch] The content of each sketch $X_u$ before computing the residual is clearly a superset of all reachable node-instance pairs~$(z,i)$ with rank $r^{(i)}_z\leq \tau_v$ in the residual problem. We can then verify that entries are removed from~$X_u$ only and for all covered node-instance pairs with~$r^{(i)}_z\leq \tau_v$. \end{proof}

 \subsubsection{Running Time}

We now analyze the running time of \skim. All updates of the residual problem together take time linear in the size of~$\{\cascade{i}\}$, since nodes and edges that are covered by the current seed set are removed once visited and never considered again. The remaining component of the computation is determined by the number of times ranks are inserted (and removed) from sketches. Inserting a value to $X_u$ involves a scan of all (remaining) incoming edges to $u$ in an instance. Removals of ranks can be charged to insertions. So we need to bound the total number of rank insertions: \begin{lemma} The expected total number of rank insertions at a particular node is $O(k \ln n)$. \end{lemma}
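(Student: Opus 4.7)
The plan is to count insertions into $X_u$ by partitioning the algorithm's execution into phases, where phase $p$ ends with the selection of the seed $\sigma_p$. Within a phase the residual is fixed (coverings happen only at phase boundaries), pairs are scanned in increasing rank order, and each insertion into $X_u$ is triggered by processing a pair $(v,i) \in R_u^{(p)}$, the current residual reachability set of $u$ at the start of phase $p$. Since a rank position is visited at most once, the total number of insertions into $X_u$ equals the number of such distinct pairs processed across all phases.

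For the per-phase bound, I would use a standard order-statistics argument on the uniformly random ranks. Phase $p$ terminates when some sketch first reaches size $k$, namely that of $\sigma_p$; by symmetry of the random ranks, this occurs at the rank position corresponding to the $(k - a_{\sigma_p}^{(p)})$-th element of $R_{\sigma_p}^{(p)}$ in residual rank order, where $a_{\sigma_p}^{(p)}$ is the seed's starting sketch size. Because ranks are uniform, the expected number of pairs of $R_u^{(p)}$ whose rank lies in this phase's range is $O(k \cdot |R_u^{(p)}|/c_p)$, where $c_p = |R_{\sigma_p}^{(p)}|$.

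Summing over phases, the greedy choice $c_p = \max_w |R_w^{(p)}| \geq |R_u^{(p)}|$ gives $O(k)$ per phase and only the trivial $O(kn)$ bound. The strengthening to $O(k \ln n)$ should come from the diminishing-returns structure of greedy: since the $c_p$ are non-increasing with $\sum_p c_p \leq n\ell$, one has $c_p \geq N_p/(n-p+1)$ on average, and a charging argument that tracks how each covering shrinks $R_u^{(p)}$ turns the sum $\sum_p |R_u^{(p)}|/c_p$ into a harmonic sum of order $\ln n$. The main obstacle is handling the dependencies between $|R_u^{(p)}|$, $c_p$, and the random ranks across phases; I expect this to require either a martingale/coupling argument showing that each phase behaves like a fresh sketch computation on the residual, or a direct charging scheme that assigns every insertion to a level of a geometric decomposition of $R_u$ by rank position.
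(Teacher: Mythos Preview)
Your per-phase setup is reasonable, but the proof has a genuine gap at exactly the point you flag: you never show how to go from the trivial $O(kn)$ bound to $O(k\ln n)$. Worse, the route you sketch cannot work as stated. The per-phase upper bound $O(k\,|R_u^{(p)}|/c_p)$ is a \emph{cumulative} count (pairs of $R_u^{(p)}$ with rank at most $r_p$), and the sum $\sum_p |R_u^{(p)}|/c_p$ is genuinely $\Theta(n)$ on simple inputs---e.g., when $R_u$ is a single pair that survives uncovered for $\Theta(n)$ phases while each $c_p$ is a small constant. So no amount of charging on that sum will recover $O(\ln n)$; the overcount is not in the dependencies but in the decomposition itself. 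Your final hint about a ``geometric decomposition of $R_u$'' is pointing in the right direction, but you have not said what the levels are or why each contributes $O(k)$.

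The paper avoids phase-by-phase accounting entirely and uses a short potential argument on $|R_u|$. The key observation you are missing is that $X_u$ is (at any moment) a uniform size-$k$ sample of the current residual set $R_u$: therefore, when a new seed covers some fraction of $R_u$, the \emph{expected number of elements deleted from $X_u$} is $k$ times that fraction, i.e., each single removal from $X_u$ corresponds in expectation to $|R_u|$ shrinking by a factor $1-1/k$. After $t$ removals the residual influence has dropped by roughly $(1-1/k)^t$; starting from at most $n$, it falls below $1/k$ after $t=O(k\ln(nk))$ removals, at which point the node is irrelevant. Since every insertion beyond the first $k$ is matched to a removal, total insertions are $O(k\ln n)$. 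This bypasses all the inter-phase dependency issues you were worried about.
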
 \begin{proof}[Proof sketch] Consider a sketch $X_v$. We can show, viewing the sketches as uniform samples of reaching pairs, that each rank value removal corresponds to cardinality---and hence influence~(marginal gain)---being reduced in expectation by a factor of $1-1/k$. The initial influence is at most $n$, so there are at most~$k\ln (n k) $ insertions until the marginal influence is reduced below $1/k$, at which point we do not need to consider the node. \end{proof} The running time is dominated by the sum over nodes $v$, of the number of times a rank is inserted to the sketch of~$v$, times the in-degree of $v$ (the maximum over instances). From the lemma, we obtain a bound of $O(k m \ln n)$ on the total number of insertions. Thus, we obtain a bound of $O(km\ln ( n) + \sum_i |\cascade{i}|)$ on the running time of the algorithm.

\subsubsection{Approximation Ratio} To obtain an approximation that is within $1+\epsilon$ with good probability, we can choose a fixed $k= c \epsilon^{-2}\log n$, for some constant~$c$. The relative error of each influence estimate of a node in an iteration is at most $\epsilon$ with probability of at least~$1-1/n^c$. Since we use polynomially many estimates (maximize influence among~$n$ nodes in each of at most $n$ iterations), all estimates are within a relative error of $\epsilon$ with probability that is polynomially close to $1-1/n^{c-2}$. Lastly, we bound the approximation ratio of the ``approximate'' greedy algorithm we work with, which uses seeds with close to maximum instead of maximum marginal gain: \begin{lemma}\label{approxgreedy:lemma} With any submodular and monotone objective function, approximate greedy, which iteratively chooses a node with marginal gain that is at least $(1-\delta)$ of the maximum, has an approximation ratio of at least $(1-(1-1/s)^s-O(\delta))$. The same claim holds in expectation when the selection is well concentrated, that is, its probability of being below $(1-a\delta)$ times the maximum decreases exponentially with $a>1$. \end{lemma}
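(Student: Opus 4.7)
The plan is to follow the classical Nemhauser--Wolsey--Fisher argument, but track the loss incurred by settling for a $(1-\delta)$-approximate marginal gain at each step. Write $f(\cdot)=\INF(\mathcal{G},\cdot)$, let $S_i$ denote the seed set after $i$ greedy iterations, let $O$ with $|O|=s$ be an optimal seed set, and set $f^\star=f(O)$ and $a_i=f^\star-f(S_i)$. By submodularity and monotonicity applied along the elements of $O$, there exists $v\in O$ with marginal gain at least $a_i/s$ relative to $S_i$, so the maximum marginal gain at iteration $i+1$ is at least $a_i/s$. Approximate greedy picks a node whose marginal gain is at least $(1-\delta)$ of this maximum, which yields the recurrence
\begin{equation*}
a_{i+1} \leq a_i\bigl(1-(1-\delta)/s\bigr).
\end{equation*}

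Iterating $s$ times and using $a_0\leq f^\star$ gives $a_s\leq f^\star\bigl(1-(1-\delta)/s\bigr)^s$. To compare with the exact-greedy bound, I would observe that the function $g(\delta)=(1-(1-\delta)/s)^s$ has $g'(\delta)=(1-(1-\delta)/s)^{s-1}\leq 1$ for $\delta\in[0,1]$, so $g(\delta)\leq g(0)+\delta=(1-1/s)^s+\delta$. Combining, $f(S_s)\geq f^\star\bigl(1-(1-1/s)^s-\delta\bigr)$, which is the claimed deterministic bound with the explicit $O(\delta)$ slack.

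For the expectation statement, let $D_i\in[0,1]$ denote the random shortfall at step $i$, i.e.\ the selected marginal gain is at least $(1-D_i)$ times the maximum. Applying the recurrence conditionally gives $\E[a_{i+1}\mid a_i,D_i]\leq a_i(1-(1-D_i)/s)$, and unrolling and taking expectations yields
\begin{equation*}
\E[a_s] \leq f^\star \prod_{i=1}^{s}\E\!\left[1-(1-D_i)/s\right] = f^\star\bigl(1-(1-\E[D])/s\bigr)^s
\end{equation*}
after using independence of the per-iteration randomness (which comes from the fresh rank permutation used in each residual sketch). The hypothesis that $\Pr[D_i>a\delta]$ decays exponentially in $a$ then bounds $\E[D_i]=O(\delta)$ by integrating the tail. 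Plugging into the above and applying the same linearization $g(\delta)\leq (1-1/s)^s+\delta$ recovers the $(1-(1-1/s)^s-O(\delta))$ approximation ratio in expectation.

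The main obstacle is the expectation version: one must justify that the per-iteration shortfalls $D_i$ can be treated as if they were independent (or at least that their expectations multiply correctly), and that the stated exponential concentration around $(1-a\delta)$ really translates into $\E[D_i]=O(\delta)$. A clean way to sidestep any subtle dependence between iterations is to bound $\Pr[\max_i D_i > t]$ by a union bound over $s\leq n$ iterations using the assumed tail, conclude that $\max_i D_i=O(\delta)$ with high probability, and then combine the high-probability case with the trivial bound $a_s\leq f^\star$ on the failure event to get the expectation statement. Everything else is a routine algebraic manipulation of the standard greedy recurrence.
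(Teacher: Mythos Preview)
Your deterministic argument is correct and essentially identical to the paper's: both set up the recurrence $a_{i+1}\le a_i\bigl(1-(1-\delta)/s\bigr)$ from the Nemhauser--Wolsey--Fisher bound and then linearize $(1-(1-\delta)/s)^s$ to $(1-1/s)^s+O(\delta)$; the paper phrases the linearization as ``first order term of the Taylor expansion,'' which is exactly your derivative bound $g'(\delta)\le 1$.

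On the expectation version you actually go further than the paper, which states the claim but gives no argument for it. Two remarks. First, your justification that the $D_i$ are independent because of a ``fresh rank permutation used in each residual sketch'' is not accurate for \skim\ as described: the rank assignment is fixed once, and later iterations reuse it on the residual, so the $D_i$ are not independent. Second, your fallback via a union bound on $\max_i D_i$ is sound but can pick up a $\log s$ factor. The clean way to avoid both issues is to take conditional expectations step by step: if the concentration hypothesis holds conditionally on the history (so $\E[D_i\mid\mathcal{F}_{i-1}]=O(\delta)$), then $\E[a_{i+1}\mid\mathcal{F}_i]\le a_i\bigl(1-(1-O(\delta))/s\bigr)$ and the tower property yields $\E[a_s]\le f^\star\bigl(1-(1-O(\delta))/s\bigr)^s$ directly, with no independence assumption and no extra logarithmic loss.
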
 \begin{proof} The argument extends the analysis of exact greedy by Nemhauser et al.~\cite{submodularGreedy:1978}. For any $s$, and after selecting any set $U$ of seeds, the maximum marginal gain by adding a single node is always at least~$1/s$ of the maximum possible gain for $s$ nodes. When using the approximation, this is at least $(1-\delta)/s$ of the maximum possible gain. Therefore, after approximate greedy selection of $s$ nodes, the influence is at least $1-(1-(1-\delta)/s)^s \leq 1-(1-1/s)^s -O(\delta)$ using the first order term of the Taylor expansion. \end{proof}

\subsection{Extensions}

\subsubsection{Adaptive Error Estimation} This worst-case analysis is too pessimistic, both for the approximation ratio and running time. In our experiments, we tested \skim\ with a fixed $k$, and observed that the computed seed sets had influence that is much closer to the exact greedy selection than indicated by the worst-case bounds.

The explanation is that the influence distribution on real inputs is heavy-tailed, with the vast majority of nodes having a much smaller influence than the one of maximum influence. One factor of $O(\log n)$ in the worst-case running time is due to a ``union bound'' ensuring a relative error of $\epsilon$ for all nodes in all iterations, with high probability. With a heavy tail distribution, we can identify the maximum with a small error if we ensure a small error only on the few nodes that have influence close to the maximum. Furthermore, when the maximum influence is separated out from other influence values, our approximate maximum is more likely to be the node with actual maximum influence. Moreover, the estimation error over iterations averages out, so as the seed set gets larger we can work with lower accuracy and still guarantee good approximation. 

We propose incorporating error estimation that is {\em adaptive} rather than worst-case. This facilitates tighter confidence bounds on the estimation quality of our output. It also allows us to adjust the sketch parameter $k$ during computation in order to meet pre-specified accuracy and confidence levels.

Let the {\em discrepancy} in an iteration be the gap between the actual maximum and the marginal influence of the selected seed. We will bound the sum of discrepancies across iterations by maintaining a confidence distribution on this sum.

The estimation uses two components. (i)~The exact marginal influence $I_s$ of the selected node in each iteration, as well as the sum~$I=\sum_{i\leq s} I_s$, which is the influence of our seed set. The value~$I_s$ is computed when generating the residual problem. (ii)~Noting in each iteration the size of the second largest sketch~(excluding the last processed rank). Intuitively, if the second largest sketch is much smaller than the first one, it is more likely that the first one is the actual maximum. We bound the discrepancy in a single iteration using Chernoff bounds. The probability that the sum of independent Bernoulli trials falls below its expectation $\mu$ by more than~$\nu \mu$ is \begin{equation}\label{chernoff} \Pr[ Z < (1-\nu)\mu] < \left( \frac{\exp(-\nu)}{(1-\nu)^{(1-\nu)}} \right)^\mu. \end{equation} We use this to bound the probability that the discrepancy exceeds~$\Delta\epsilon$, where $\Delta$ is the exact marginal gain of our selected seed node. We consider the second largest sketch size,~$k'\leq k-1$~(the last rank of $\tau$ is not considered part of the sketch even if included). We use $Z= k'$, $\mu = \tau \Delta (1+\epsilon)$, and $\nu =1- \frac{k'}{\tau \Delta (1+\epsilon)}$ in Equation~\eqref{chernoff} to obtain a confidence level.

Finally, to maintain an upper bound on the confidence-error distribution of the sum of discrepancies, we take a convolution, after each iteration, of the current distribution with the distribution of the current iteration. 

\subsubsection{Alternative Implementations} \onlyinproc{ \skim\ can be adapted for higher concurrency by running the sketch-building phases in batches of ranks. We can also adapt it to process inputs presented as an IC model instead of as a set of instances. This yields a more efficient implementation than when generating a set of instances using simulations and running \skim\ on them. In IC-model \skim, the residual problem is a collection of partial models and sketch building is performed on the probabilistic model. We omit details due to space limitations. } 

\section{Influence Oracles} \label{binaryQ:sec}

We now present an accurate and efficient oracle for binary influence, which is based on precomputing a combined reachability sketch (as defined in Section~\ref{sketch:sec}) for each node. We preprocess a set of $\ell$ instances $\mathcal{G}=\{\cascade{i}\}$ using $O(k \sum_{i=1}^\ell |E^{(i)}|)$ computation and working storage of $O(k)$ per node. The preprocessing generates combined reachability sketches $X_v$ of size $O(k)$ for each node $v\in V$.

\begin{theorem} \label{binaryinforacles:thm} Given a set $\{X_v\}$ of combined reachability sketches for $\mathcal{G}$ with parameter $k$, influence queries $\INF(\mathcal{G},S)$ for a set $S$ of nodes can be estimated in $O(|S|k\log|S|)$ time from the sketches $\{X_u \mid u\in S\}$. The estimate is non\-negative and unbiased, has CV at least~$1/\sqrt{k-2}$, and is well concentrated, meaning that the probability that the relative error exceeds $a/\sqrt{k}$ decreases exponentially with $a>1$. \end{theorem}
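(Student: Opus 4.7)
The plan is to reduce the oracle query to a single bottom-$k$ cardinality estimation on the union $\bigcup_{u\in S} R_u$, and then invoke known properties of bottom-$k$ min-hash estimators. By Equation~\eqref{InfbinaryIC:eq}, $\INF(\mathcal{G},S)=\frac{1}{\ell}\bigl|\bigcup_{u\in S} R_u\bigr|$, so it suffices to estimate this cardinality from $\{X_u\mid u\in S\}$ and then scale by $1/\ell$.

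The first substantive step is the following claim: the bottom-$k$ of $\bigcup_{u\in S} X_u$ equals the bottom-$k$ min-hash sketch of $\bigcup_{u\in S} R_u$, where ranks are the $r^{(i)}_v$ shared across all sketches. The easy direction is that every rank appearing in some $X_u$ corresponds to an element of $R_u\subseteq\bigcup_u R_u$. For the other direction, let $(v,i)$ be an element whose rank $r^{(i)}_v$ lies among the $k$ smallest ranks of $\bigcup_u R_u$, and pick any $u\in S$ with $(v,i)\in R_u$; since $R_u\subseteq\bigcup_u R_u$ and the $k$ smallest ranks of a superset remain among the $k$ smallest of any subset containing them, $r^{(i)}_v$ must be among the bottom-$k$ ranks of $R_u$, hence $r^{(i)}_v\in X_u$. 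This gives set equality.

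Given this identification, the query reduces to computing $T=\kth(\bigcup_{u\in S} X_u)$ and returning the estimate $\widehat{\INF}(\mathcal{G},S)=\frac{1}{\ell}\cdot\frac{k-1}{T}$ when the merged bottom-$k$ is full, or $\frac{1}{\ell}|\bigcup_{u\in S} X_u|$ otherwise (degenerate case). Nonnegativity is immediate. Unbiasedness, the CV bound $1/\sqrt{k-2}$, and the exponential concentration bound on relative error are inherited verbatim from the bottom-$k$ cardinality estimator applied to the (random but now canonical) sketch of $\bigcup_u R_u$, as developed in~\cite{ECohen6f,bottomk07:ds}; scaling by the deterministic factor $1/\ell$ preserves all three properties. (The theorem statement's CV should read ``at most $1/\sqrt{k-2}$,'' matching Section~\ref{sketch:sec}.)

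The remaining piece is the running time. We maintain the sketches $X_u$ as sorted lists of size at most $k$ (which can be arranged at preprocessing), and perform an $|S|$-way merge using a min-heap keyed by rank, with one active pointer per sketch. Initializing the heap costs $O(|S|\log|S|)$, and each extraction costs $O(\log|S|)$. We extract ranks in increasing order until either we have accumulated the $k$ smallest distinct ranks or all sketches are exhausted; in the worst case this traverses all $|S|k$ entries, giving an overall bound of $O(|S|k\log|S|)$. I do not anticipate a real obstacle here---the only mildly subtle point is the union-of-sketches-equals-sketch-of-union identity, since it is exactly what justifies treating the merged collection as a genuine bottom-$k$ min-hash sample of the union and therefore importing the off-the-shelf estimator guarantees.
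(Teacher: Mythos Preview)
Your proposal is correct and follows essentially the same approach as the paper: compute the bottom-$k$ sketch of the union via $\tau=\kth\bigl(\bigcup_{u\in S} X_u\bigr)$, apply the standard estimator $(k-1)/\tau$, and inherit unbiasedness, the CV bound, and concentration from~\cite{ECohen6f,bottomk07:ds}. The paper's own justification is a single sentence stating that this ``would already conclude the proof of Theorem~\ref{binaryinforacles:thm},'' so your explicit verification of the sketch-of-union identity and the heap-merge running-time argument simply fill in details the paper leaves implicit; the paper then introduces the sharper Cohen--Kaplan estimator~\eqref{unionest} for the implementation but explicitly notes it is not needed for the theorem.
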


We next present the two components of our oracle: estimating the influence of $S$ from the sketches of the nodes in $S$ and efficiently computing all combined reachability sketches.

\subsection{Influence Estimation from Sketches} We show how to use the combined reachability sketches of a set of nodes $S$ to estimate the influence of $S$, as given in Equation~\eqref{InfbinaryIC:eq}. In graph terms, this means estimating the cardinality of the union $\bigcup_{u\in S} R_u$ from the sketches $X_u$, with $u\in S$. The influence~$\INF(\mathcal{G},S)$ is the union cardinality divided by the number of instances $\ell$ and, accordingly, is estimated using $\widehat{\bigl|\bigcup_{v\in S} R_v\bigr|}/\ell$. Our estimators use the threshold rank $\tau_u$ of each node $u$; see~Equation~\eqref{combinedthresh:eq}.

From the bottom-$k$ sketches of each set $R_u$ for $u\in S$ we can unbiasedly estimate the cardinality of the union $\bigcup_{u\in S} R_u$. One way to do this is to compute the bottom-$k$ sketch of the union~\cite{ECohen6f}, which has threshold value $\tau=\kth\{\bigcup_{u\in S} X_u\}$ and apply the cardinality estimator $(k-1)/\tau$. This would already conclude the proof of Theorem \ref{binaryinforacles:thm}.

In our implementation, we use a strictly better union cardinality estimator that uses all the (at most $k |S|$) values in the set of sketches instead of just the $k$th smallest: \begin{equation} \label{unionest} \widehat{\bigl|\bigcup_{v\in S} R_v\bigr|}=\sum_{z \in \bigcup_{v\in S} X_v\setminus \{\tau_v\}} \frac{1}{\max_{u\in S | z\in X_u\setminus \{\tau_u\}} \tau_u}. \end{equation} This estimator, proposed by Cohen and Kaplan~\cite{CK:sigmetrics09}, can be computed from the $|S|$ sketches in time $O(|S|k\log|S|)$, by first sorting the $|S|$ sketches by decreasing threshold, and then identifying for each distinct rank value the threshold of the first sketch that contains it. When the sets $R_u$ are all the same, the estimate is the same as applying an estimator to the bottom-$k$ sketch on the union, but Equation~\eqref{unionest} can have up to a factor of $\sqrt{|S|}$ lower CV when the sets $R_u$ are sufficiently disjoint. Moreover, this estimator is an optimal sum estimator in that it minimizes variance given the information available in the sketches.

 We can also derive a permutation version of Equation~\eqref{unionest}. The simplest way is to treat the permutation rank $T$ as a uniform rank $r=(T-1)/(\ell n -1)$ which is the probability that the rank of another node is smaller than $T$.

\subsection{Building Combined Reachability Sketches}

When there is a single instance $G=(V,E)$, the combined sketches are simply reachability sketches~\cite{ECohen6f,bottomk07:ds}. Reachability sketches $Y_v$ for all nodes can be computed very efficiently, using at most $m k$ edge traversals in total, where $m$ is the number of edges~\cite{ECohen6f}. 

\begin{algorithm2e}[t] \caption{Combined reachability sketches\label{reachsketch1:alg}} \DontPrintSemicolon \SetKwArray{Sketches}{sketches} \SetKwArray{LocalSketches}{local} \SetKwData{Rank}{rank} \SetKw{Skip}{skip} \SetKw{Prune}{prune} \SetKwFunction{Merge}{merge} \ForAll{nodes~$u \in V$}{ $\Sketches{u} \leftarrow \emptyset$\tcp*{Global sketches} $\LocalSketches{u} \leftarrow \emptyset$\tcp*{Instance-local sketches} } \BlankLine shuffle the $n\ell$ node-instance pairs~$(u,i)$\; \BlankLine

\ForAll{instances~$i$}{ \tcp{Build local sketches for instance~$i$} \For{pairs~$(u,j)$ with~$j=i$ by increasing rank~$r$}{ BFS from~$u$ in reverse graph~$G^{(i)}$, during which\; \ForEach{scanned node~$v$}{ \lIf{$|\LocalSketches{v}| = k$}{prune} $\LocalSketches{v} \leftarrow \LocalSketches{v} \cup \{r\}$\; } } \BlankLine \tcp{Merge local sketches into global sketches} \ForAll{nodes~$u$}{ \tcp{Both $\Sketches{u}$ and~$\LocalSketches{u}$ are sorted} $\Sketches{u} \leftarrow \Merge{$\Sketches{u}$,$\LocalSketches{u}$}$\; trim~$\Sketches{u}$ to size~$k$\; $\LocalSketches{u} \leftarrow \emptyset$\; } } \Return{\Sketches}\; \end{algorithm2e}

 Algorithm \ref{reachsketch1:alg} computes combined sketches by applying the pruned searches algorithm of Cohen~\cite{ECohen6f} on each instance $\cascade{i}$, obtaining a sketch $Y^{(i)}_v$ for each node, and combining the results. The combined sketch $X_v$ is obtained by taking the bottom-$k$ values in the union of the $\ell$ sketches, defined as~$X_v \gets \text{{\sc bottom-}}k(\cup_{i\in \ell} Y^{(i)}_v ).$ 

The algorithm runs in $O(k \sum_i |E^{(i)}|)$ time. Rather than storing all sets of sketches, we can compute and merge concurrently or sequentially, but after each step, take the bottom-$k$ values in the current bottom-$k$ set and the newly computed sketch for instance~$G^{(i)}$: $X_v \gets \text{{\sc bottom-}}k\{X_v,Y^{(i)}_v\}$. Therefore, the additional run time storage requirement for sketches is~$O(nk)$. This gives us the worst-case bounds on the computation stated in Theorem~\ref{binaryinforacles:thm}. \section{Experiments} \label{experiments:sec}

We implemented our algorithms in C++ using Visual Studio 2013 with full optimization. All experiments were run on a machine with two Intel Xeon E5-2690 CPUs and 384\,GiB of DDR3-1066 RAM, running Windows 2008R2 Server. Each CPU has 8 cores~(2.90\,GHz, 8\,$\times$\,64\,kiB L1, 8~$\times$\,256\,kiB, and 20\,MiB L3 cache), but all runs are sequential for consistency.

We ran our experiments on benchmark networks available as part of the SNAP~\cite{SNAP} and WebGraph~\cite{bv-twfct-04} projects. More specifically, we test \emph{social}~(\instance{Epinions}, \instance{Slashdot}, \instance{Gowalla}, \instance{TwitterFollowers}, \instance{LiveJournal}, \instance{Orkut}, \instance{Friendster}, \instance{Twitter}), \emph{collaboration}~(\instance{AstroPh}), and web~(\instance{Slovakia}, \instance{Slovakia$^\top$}) networks. \instance{Slovakia$^\top$} is obtained from~\instance{Slovakia} by reversing all arcs~(influence follows the reverse direction of links).

Kempe et al.~\cite{KKT:KDD2003} proposed two natural ways of associating probabilities with edges in the binary IC model: the \emph{uniform} scheme assigns a constant probability~$p$ to each directed edge~(they used~$p=0.1$ and~$p=0.01$), whereas in the \emph{weighted cascade}~(wc) scheme the probability is the inverse of the degree of the head node~(making the probability that a node is influenced less dependent on its number of neighbors). We consider the wc scheme by default, but we will also experiment with the uniform scheme~(un). These two schemes are the most commonly tested in previous studies of scalability~\cite{Leskovec:KDD2007,CWY:KDD2009,CWW:KDD2010,JHC:ICDM2012,OAYK:AAAI2014,TXS:sigmod2014}.

\subsection{Influence Maximization}

\begin{table*}[tb] \centering \caption{\label{tab:main} Performance of \skim and \irie. \skim uses~$k=64$, $\ell=64$, and we evaluate the influence on~512~(different) sampled instances. For all runs~(except those for~$n$ seeds) we set a time limit of two hours. For the runs that did not finish~(DNF), we report the influence of the seed set~(its size is shown in parenthesis after ``DNF'') computed within the time limit~(*).} \setlength{\tabcolsep}{1.5ex} \begin{tabular}{@{}lrrrrrrrrrrr@{}} \toprule &&& \multicolumn{4}{c}{\textbf{influence [\%]}} & \multicolumn{5}{c}{\textbf{running time [sec]}}\\ \cmidrule(lr){4-7}\cmidrule(l){8-12} &&& \multicolumn{2}{c}{50 seeds} & \multicolumn{2}{c}{1000 seeds} & \multicolumn{2}{c}{50 seeds} & \multicolumn{2}{c}{1000 seeds} & $n$ seeds \\ \cmidrule(lr){4-5}\cmidrule(lr){6-7}\cmidrule(lr){8-9}\cmidrule(lr){10-11}\cmidrule(l){12-12} instance & $|V|$ [$\cdot 10^3$] & $|A|$ [$\cdot 10^3$] & \skim & \irie & \skim & \irie & \skim & \irie & \skim & \irie & \skim\\ \midrule \instance{AstroPh} & 14.8 & 239.3 & 11.1 & 11.4\phantom{*} & 45.9 & 46.5\phantom{*} & 0.5 & 0.5 & 1.0 & 4.3 & 1.9\\ \instance{Epinions} & 75.9 & 508.8 & 15.8 & 15.9\phantom{*} & 34.4 & 34.1\phantom{*} & 0.7 & 0.9 & 1.6 & 10.3 & 6.7\\ \instance{Slashdot} & 77.4 & 828.2 & 21.4 & 21.6\phantom{*} & 52.1 & 52.3\phantom{*} & 0.8 & 1.5 & 1.9 & 19.8 & 7.5\\ \instance{Gowalla} & 196.6 & 1\,900.7 & 18.1 & 18.1\phantom{*} & 30.9 & 31.1\phantom{*} & 1.4 & 5.1 & 3.5 & 75.2 & 21.5\\ \instance{TwitterFollowers} & 456.6 & 14\,855.9 & 4.4 & 4.2\phantom{*} & 17.2 & 17.5\phantom{*} & 3.0 & 23.1 & 10.7 & 388.5 & 85.1\\ \instance{LiveJournal} & 4\,847.6 & 68\,475.4 & 1.6 & 1.5\phantom{*} & 6.8 & 6.7\phantom{*} & 8.6 & 261.1 & 31.1 & 4\,576.5 & 933.0\\ \instance{Orkut} & 3\,072.6 & 234\,370.2 & 5.3 & 5.3\phantom{*} & 12.1 & 11.5\textsuperscript{*} & 34.0 & 473.3 & 102.9 & DNF~(915) & 1\,197.2\\ \instance{Friendster} & 65\,608.4 & 1\,806\,067.1 & 9.5 & 8.8\textsuperscript{*} & 15.4 & 8.8\textsuperscript{*} & 794.0 & DNF~(43) & 1\,308.5 & DNF~(43) & 19\,254.2\\ \instance{Twitter} & 41\,652.2 & 1\,468\,364.9 & 21.1 & 21.1\phantom{*} & 38.0 & 25.3\textsuperscript{*} & 965.4 & 4\,233.4 & 1\,912.8 & DNF~(92) & 11\,558.8\\ \instance{Slovakia} & 50\,636.2 & 1\,930\,292.9 & 5.4 & 4.8\phantom{*} & 14.8 & 10.1\textsuperscript{*} & 86.6 & 2\,272.5 & 293.9 & DNF~(290) & 11\,743.4\\ \instance{$\text{\textsf{Slovakia}}^\top$} & 50\,636.2 & 1\,930\,292.9 & 10.3 & 10.0\phantom{*} & 25.9 & 16.7\textsuperscript{*} & 220.7 & 1\,740.6 & 621.4 & DNF~(230) & 11\,679.3\\ \bottomrule \end{tabular} \end{table*}

This section evaluates \skim, our new sketch-based influence maximization algorithm. By default we set the number of sampled instances to~$\ell = 64$ and compute sketches with~$k = 64$ entries. (These choices will be justified in later experiments.) To evaluate the actual influence values of the seeds computed by \skim, we use a set of 512 \emph{different} sampled instances, in which we simply run BFSes a posteriori.

Table~\ref{tab:main} summarizes the performance of our algorithm on several networks of varying sizes with up to almost two billion edges. Besides the network sizes, the table reports results for three seed set sizes~$s$: 50,~1000 and~$n$, i.e., computing full permutation. In each case, it reports the total running time of our algorithm as well as the total influence of the related seed set as a percentage of~$n$. (Note that for~$s=n$ this value is~100\,\% by definition, so we omit it in the table.) For~$s=50$ and~$1000$, the table also reports the corresponding numbers for \irie~\cite{JHC:ICDM2012}, one of the fastest available heuristics that can generate full permutations. We use our own implementation of \irie, which is somewhat faster than the one evaluated in the original paper. Except for~$s=n$, we set an execution time limit of two hours; we report~``DNF'' and the corresponding number of computed seeds for those runs that did not finish.

The table shows that the influences computed by \irie and \skim are very close; sometimes \skim being better. However, \skim is significantly faster, outperforming \irie by several orders of magnitude on many instances. In particular, when computing~1000 instead of 50~seeds, \skim{}'s speedup over \irie becomes more evident as \irie's running time grows linearly with the number of seed nodes, whereas with \skim\ it decreases with the size of the residual problem. As a result, we can compute the~1000 most influential nodes on a graph with 65 million nodes and 1.8 billion edges~(Friendster) in just 22 minutes. Similarly, computing a full influence ordering with \skim takes less then 5.5 hours on all graphs.

\begin{table}[tb] \centering \caption{Comparing \skim and \timplus regarding influence and running time for 50 and 1000 seeds.\label{tab:tim}} \setlength{\tabcolsep}{.55ex} \begin{tabular}{@{}lrrrrrrrr@{}} \toprule & \multicolumn{4}{c}{\textbf{influence~[\%]}} & \multicolumn{4}{c}{\textbf{running time~[sec]}}\\ \cmidrule(lr){2-5}\cmidrule(l){6-9} & \multicolumn{2}{c}{50 seeds} & \multicolumn{2}{c}{1000 seeds} & \multicolumn{2}{c}{50 seeds} & \multicolumn{2}{c}{1000 seeds}\\ \cmidrule(lr){2-3}\cmidrule(lr){4-5}\cmidrule(lr){6-7}\cmidrule(l){8-9} instance & \skim & \tim & \skim & \tim & \skim & \tim & \skim & \tim\\ \midrule \instance{AstroPh} & 11.1 & 11.6 & 45.9 & 47.0 & 0.5 & 0.7 & 1.0 & 1.8\\ \instance{Epinions} & 15.8 & 15.8 & 34.4 & 34.8 & 0.7 & 0.3 & 1.6 & 2.3\\ \instance{Slashdot} & 21.4 & 22.2 & 52.1 & 52.6 & 0.8 & 1.2 & 1.9 & 6.9\\ \instance{Gowalla} & 18.1 & 18.2 & 30.9 & 31.4 & 1.4 & 2.0 & 3.5 & 13.4\\ \instance{TwitterF's} & 4.4 & 4.6 & 17.2 & 17.6 & 3.0 & 7.1 & 10.7 & 28.7\\ \instance{LiveJournal} & 1.6 & 1.7 & 6.8 & 7.0 & 8.6 & 26.0 & 31.1 & 89.1\\ \instance{Orkut} & 5.3 & 5.4 & 12.1 & 12.3 & 34.0 & 102.0 & 102.9 & 427.8\\ \instance{Friendster} & 9.5 & 9.6 & 15.4 & 15.6 & 794.0 & 406.5 & 1,308.5 & 410.5\\ \instance{Twitter} & 21.1 & 21.3 & 38.0 & 38.1 & 965.4 & 291.6 & 1,912.8 & 795.0\\ \instance{Slovakia} & 5.4 & 5.5 & 14.8 & 14.9 & 86.6 & 299.3 & 293.9 & 647.1\\ \instance{$\text{\textsf{Slovakia}}^\top$} & 10.3 & 10.5 & 25.9 & 26.3 & 220.7 & 384.1 & 621.4 & 313.4\\ \bottomrule \end{tabular} \end{table}

We also compare \skim to \timplus~\cite{TXS:sigmod2014}, the fastest influence maximization algorithm we are aware of. We ran their implementation~(kindly given to us by the authors) to report figures on our instances. As in their experiments, we set the~$\varepsilon$ parameter of \timplus to~$1.0$. Table~\ref{tab:tim} reports the influence~(as percentage of~$n$) as well as the running time for 50~and 1000~seed nodes. We note that \skim and \timplus are extremely close in quality, with \timplus tending to be slightly better. \skim is faster than \timplus on most instances except on \instance{Friendster}, \instance{Twitter}, and \instance{$\text{\textsf{Slovakia}}^\top$} with 1000~seeds, and generally the two are never more than a factor of three apart. However, recall that \skim actually computes a \emph{sequence} of nodes such that every prefix of this sequence also~(approximately) maximizes the influence. In contrast, \timplus\ must be rerun to obtain a smaller set of maximally influential nodes.

We next argue why our paremeter choices are reasonable. First, we evaluate the impact of the number $\ell$ of instances on the solution quality. Figure~\ref{fig:simulations} (left) reports the quality of the seed nodes found by \textsc{Greedy} (\greedy) when we use different~$\ell$ values during the algorithm, but evaluate the quality of the resulting seed set on~4096~(different) instances. We observe that increasing~$\ell$ does help quality, but only up to a certain point. In particular, values beyond 64 yield modest improvements. Since our running times depend on~$\ell$, we use this value by default.

\begin{figure}[tb] \centering \includegraphics[page=1]{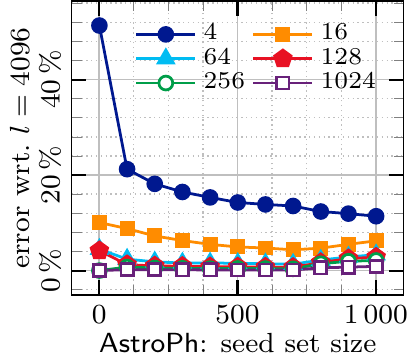}\hfil \includegraphics{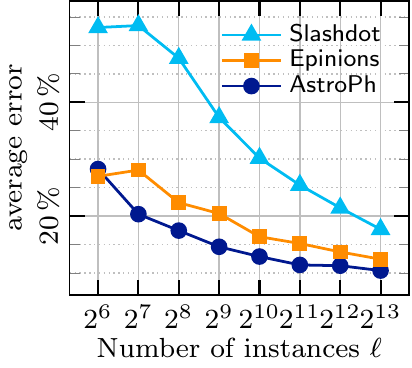} \caption{Evaluating different numbers of simulations~(left) and evaluating the average error of our oracle on 1000 random seeds, subject to varying~$\ell$. The right plot is discussed in Section~\ref{sec:exp:oracles}.} \label{fig:simulations} \label{fig:ell} \end{figure}

Figure~\ref{fig:parameters} compares \skim to \greedy, \irie, and \degree~(including nodes by order of decreasing degree) on two inputs: \instance{Slashdot} and \instance{TwitterFollowers}. For \skim, we test various values for~$k$~(4, 16, 64, 256). We report the influence error when compared to \greedy~(top) and the running time~(bottom). We observe that the error for \skim decreases as we increase~$k$, $k=64$ being the sweet spot, after which solution quality does not improve by much anymore. Running times increase for all algorithms with the size of the seed set, but \skim is consistently the fastest algorithm for any size.

\begin{figure}[tb] \centering \includegraphics[page=2]{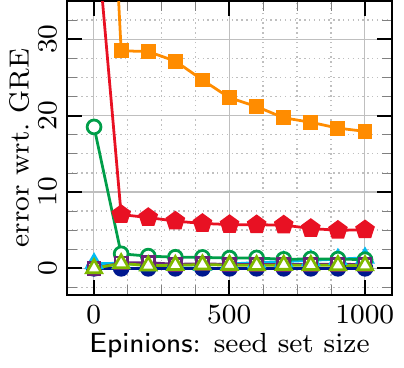}\hfil\includegraphics[page=3]{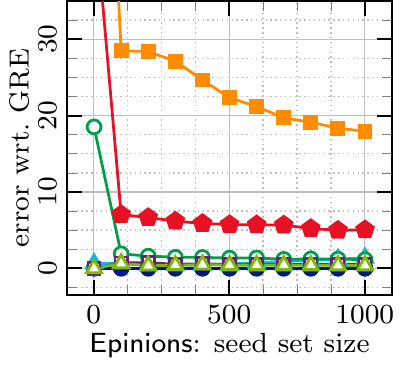}

\medskip

\includegraphics[page=2]{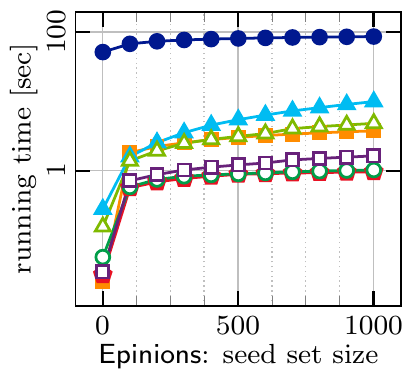}\hfil\includegraphics[page=3]{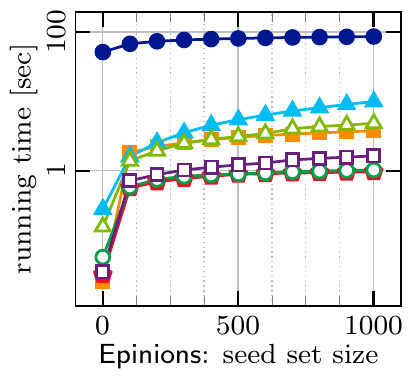} \caption{Evaluating influence and running time for several algorithms. The legend applies to all plots.} \label{fig:parameters} \end{figure}

Figure~\ref{fig:models} evaluates the performance of \skim\ and \irie on the two IC schemes~(wc,un), using \instance{TwitterFollowers} as input. We observe that \skim matches the solution quality of \irie but is significantly faster.

\begin{figure}[tb] \centering \includegraphics[page=3]{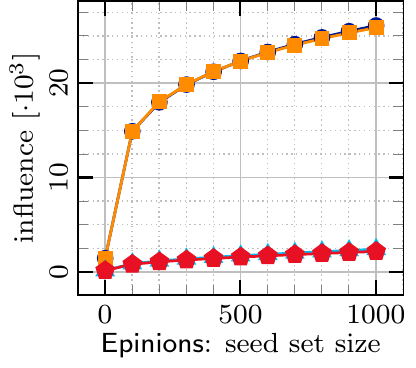}\hfil\includegraphics[page=3]{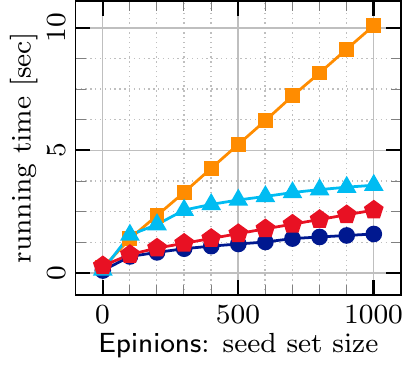} \caption{Evaluating \skim and \irie on the uniform~(un) and weighted cascade~(wc) models. The legend applies to both plots.} \label{fig:models} \end{figure}

Finally, Figure~\ref{fig:coverage} shows the influence~(top) and running time~(bottom) of \skim when computing the full permutation. We plot the relative influence and running time~(both as percentage) subject to the number of computed seed nodes as the algorithm progresses~(also as percentage of~$n$). To the best of our knowledge, we are the first who are able to compute (approximately) the full Pareto front of influence versus seed set size on graphs with billions of edges within a few hours only. The tradeoff seems to characterize the core of the network: On \instance{Slovakia$^\top$} and \instance{Twitter},~0.1\% of the nodes already cover almost~50\% of the entire graph, while on \instance{Slashdot} and \instance{Friendster},~0.1\% of the seeds only cover~25--30\% of the graph, albeit with a faster growth. Other instances have a slower growth in influence, but on all instances~10\% of the nodes cover at least~50\% of the graph. Regarding running time, we observe that all instances exhibit similar behavior. In particular, more than 50\% of the total running time is spent computing the first 10\% of seed nodes.

\begin{figure}[tb] \centering\includegraphics[page=1]{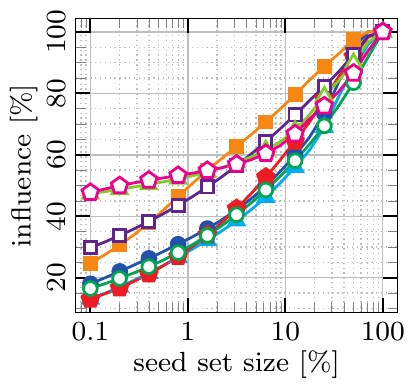}\hfil\includegraphics[page=2]{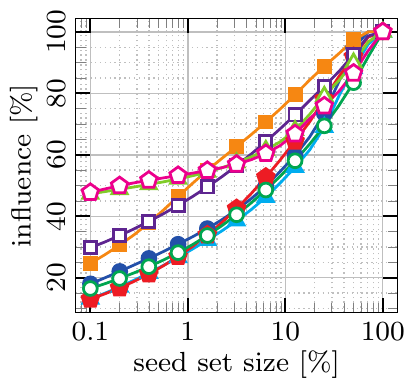}\caption{Evaluating influence permutations~(top) and running time~(bottom) on several instances. The legend applies to both plots.} \label{fig:coverage} \end{figure}

\subsection{Influence Oracles}\label{sec:exp:oracles}

This section evaluates our influence oracle~(cf.~Section~\ref{binaryQ:sec}). We use the IC model (with wc probabilities) to generate a set of~$\ell=64$ instances. We build combined reachability sketches of size~$k=64$ for this set of instances and evaluate the performance of our oracle~(cf.~Section~\ref{model:sec}).

Table~\ref{tab:oracle} summarizes the performance of our oracle on several networks. It reports the time spent for preprocessing and the required space~(in~MiB) to store the combined sketches. Queries are evaluated for seed set sizes~$s$ of 1, 50, and 1000. For each~$s$, we generate 100 seed sets whose nodes are selected uniformly at random. We report the average running time of the query~(estimator) in microseconds and the relative error of the estimated influence when compared to the exact influence of the respective seed set.

We observe that preprocessing times are reasonable for all graphs while space consumption is essentially linear in the number of nodes. For example, on \instance{LiveJournal}~(the biggest instance tested), the sketches require~2.3\,GiB of space, which we computed in just 34 minutes. The influence of a single node can then be estimated in 1--2\,\textmu{}s, while for 1000 seed nodes we require 5.2\,ms. Note that the query time is almost independent of the graph size. Using~$k=64$, the error stays well below 10\% for one seed node, and decreases significantly for larger seed sets~(to around~1\% for~$s=1000$).

\begin{table}[tb] \centering \caption{Evaluating our influence oracle with~$\ell=64$.\label{tab:oracle}} \setlength{\tabcolsep}{0.6ex} \begin{tabular}{@{}lrrrrrrrr@{}} \toprule & \multicolumn{2}{c}{\textbf{preproc.}} & \multicolumn{6}{c}{\textbf{queries}} \\ \cmidrule(lr){2-3}\cmidrule(l){4-9} & & & \multicolumn{2}{c}{1 seed} & \multicolumn{2}{c}{50 seeds} & \multicolumn{2}{c}{1000 seeds}\\ \cmidrule(lr){4-5}\cmidrule(lr){6-7}\cmidrule(l){8-9} & time & space & time & err. & time & err. & time & err. \\ instance & [sec] & [MiB] & [\textmu{}s] & [\%] & [\textmu{}s] & [\%] & [\textmu{}s] & [\%] \\ \midrule \instance{AstroPh} & 4 & 7.2 & 1.6 & 8.5 & 166.7 & 2.1 & 4\,658.3 & 0.5\\ \instance{Epinions} & 10 & 37.1 & 1.3 & 5.2 & 155.0 & 3.4 & 5\,011.1 & 1.1\\ \instance{Slashdot} & 20 & 37.8 & 1.5 & 6.0 & 155.2 & 3.9 & 4\,982.3 & 1.0\\ \instance{Gowalla} & 46 & 96.0 & 1.5 & 7.3 & 179.8 & 3.2 & 5\,275.6 & 1.1\\ \instance{TwitterFollowers} & 229 & 223.0 & 2.1 & 7.0 & 190.2 & 3.3 & 5\,061.8 & 0.8\\ \instance{LiveJournal} & 2\,064 & 2\,367.0 & 2.0 & 7.1 & 189.6 & 3.0 & 5\,168.3 & 0.9\\ \bottomrule \end{tabular} \end{table}

Figure~\ref{fig:oracleerror} shows in detail how the error of the estimator~($y$ axis) decreases when the seed set size increases~($x$ axis). To better evaluate the performance of estimating the \emph{union} of several reachability sets, we use the following \emph{neighborhood generator} for queries: For each query, it first picks a node~$u$ at random with probability proportional to its degree. From~$u$ it exhaustively grows a BFS of the smallest depth~$l$ such that the tree contains at least~$s$ nodes. The nodes for the seed set are then uniformly sampled from this tree. With this generator, we expect the reachability sets of seed nodes to highly overlap. Looking at the figure, we observe that the estimation error of our oracle decreases rapidly for increasing~$s$. Also, running queries from the neighborhood generator~(right) compared to the uniform one~(left), has almost no effect on the estimation error; for 50 seed nodes it is even better on many instances.

\begin{figure}[tb] \centering\includegraphics[page=1]{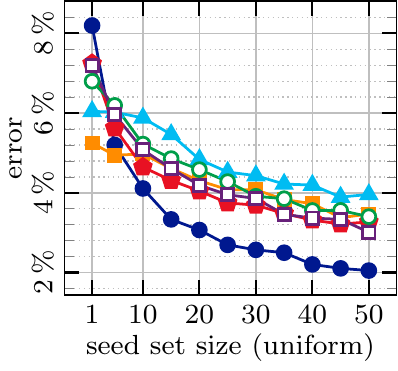}\hfil\includegraphics[page=2]{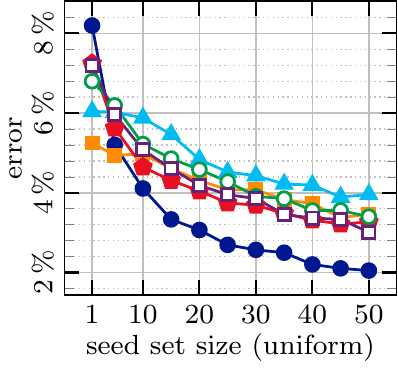} \caption{Evaluating our oracle for seed sets of varying size, which are selected uniformly at random~(left) or with our BFS-based method~(right).\label{fig:oracleerror}} \end{figure}

Finally, Figure~\ref{fig:ell}~(right) reports the performance of the oracle for fixed instances on the general IC model. We vary the number~$\ell$ of instances generated by simulations when building the oracle, but compute the error on a different set of~8192 instances. Since our oracle implementation is optimized for fixed instances, we see a higher error with~$\ell=64$. We can also see that the error decreases with the number of simulations. We conclude that for an IC model oracle, it is beneficial to construct sketches that have approximation guarantees with respect to the IC model itself~(cf. Section~\ref{ICsketches:Sec}) rather than work with simulations.

\section{Conclusion}

We presented highly scalable algorithms for binary influence computation. \skim\ is a sketch-space implementation of the greedy influence maximization algorithm that scales it by several orders of magnitude, to graphs with billions of edges. \skim\ computes a sequence of nodes such that each prefix has a probabilistic guarantee on approximation quality that is close to that of {\sc Greedy}. We also presented sketch-based influence oracles, which after a near-linear processing of the instances can estimate influence queries in time proportional to the number of seeds. Our experimental study focused on instances generated by an IC model, since the fastest algorithms we compared with only apply in this model. Our experiments revealed that \skim\ is accurate and faster than other algorithms by one to two order of magnitude.

In future work, we plan to develop a {\skim}-like algorithm for {\em timed influence}, where edges have lengths that are interpreted as transition times and we consider both the speed and scope of infection \cite{Gomez-RodriguezBS:ICML2011,CLZ:AAAI2012,CDFGGW:COSN2013,ACKP:KDD2013,DSGZ:nips2013}. We also plan to use sketches to efficiently estimate the Jaccard similarity of the influence sets of two nodes, which we believe to be an effective similarity measure~\cite{CDFGGW:COSN2013}.

\bibliographystyle{plain} 

\end{document}